\documentclass[letterpaper, 10 pt, conference]{ieeeconf}

\IEEEoverridecommandlockouts                              %
\usepackage{cite}

\usepackage[utf8]{inputenc}
\usepackage{textcomp}
\usepackage{graphicx}
\usepackage{mathrsfs} 
\usepackage{amsmath,amssymb,amsfonts,graphicx,float}

\usepackage{mathtools}
\usepackage[version=4]{mhchem}
\usepackage{siunitx}
\usepackage{longtable,tabularx}
\usepackage{verbatim}

\usepackage{hyperref}
\usepackage{algorithm}
\usepackage{algpseudocode}

\usepackage{xcolor}

\usepackage{subcaption}

\usepackage{eso-pic}

\newcommand{\copyrightfooter}{%
  \AddToShipoutPictureFG*{%
    \AtPageLowerLeft{%
      \put(\LenToUnit{0.7in},\LenToUnit{0.25in}){%
        \parbox[b]{\dimexpr\paperwidth-1.4in\relax}{%
          \fontsize{7}{8}\selectfont
          \textcopyright~2026 IEEE.
          Personal use of this material is
permitted. Permission from IEEE must be obtained for all
other uses, in any current or future media, including
reprinting/republishing this material for advertising or
promotional purposes, creating new collective works,
for resale or redistribution to servers or lists,
or reuse of any copyrighted component of this work
in other works.
        }%
      }%
    }%
  }%
}

\newtheorem{assumption}{Assumption}
\newtheorem{theorem}{Theorem}
\newtheorem{lemma}{Lemma}

\newtheorem{remark}{Remark}

\title{\LARGE \bf Integrating Prescribed Performance and RISE Control for Asymptotic Tracking of Uncertain High-Order Nonlinear Systems}

\author{Athanasios K. Gkesoulis\thanks{The first two authors contributed equally to this work.}, Christos K. Verginis, George C. Karras and Charalampos P. Bechlioulis
\thanks{This work was supported by the Swedish Research Council (VR, grant 2024-04905)
and the project “Applied Research for Autonomous Robotic Systems” (MIS 5200632), which is implemented within the framework of the National Recovery and Resilience Plan “Greece 2.0” (Measure: 16618-Basic and Applied Research) and funded by the European Union--NextGenerationEU.}
\thanks{Athanasios K. Gkesoulis is with the Robotics Institute, Athena Research Center, Marousi, Greece (e-mail: a.gkesoulis@athenarc.gr).}
\thanks{Christos K. Verginis is with the Division of Signals and Systems, Department of Electrical Engineering, Uppsala University, Uppsala, Sweden (e-mail: christos.verginis@angstrom.uu.se).}
\thanks{George C. Karras is with the Robotics Institute, Athena Research Center,  Marousi, Greece, and also with the Department of Informatics and Telecommunications, School of Sciences, University of Thessaly, Lamia, Greece (e-mail: karrasg@athenarc.gr).}
\thanks{Charalampos P. Bechlioulis is with the Robotics Institute, Athena Research Center, 151 25 Marousi, Greece, and also with the Division of Systems and Automatic Control, Department of Electrical and Computer Engineering, University of Patras, Patras, Greece (e-mail: chmpechl@athenarc.gr).}}%

\begin{document}

\maketitle
\thispagestyle{empty}
\pagestyle{empty}

\copyrightfooter

\begin{abstract}
This paper addresses asymptotic tracking with prescribed transient and steady-state performance for a class of uncertain high-order nonlinear systems with uncertain dynamics. The proposed controller combines a barrier-type transformation of a filtered tracking error with a robust integral of the sign of the error (RISE) mechanism, yielding a continuous and low-complexity design. The proposed method does not employ neural, fuzzy, or other global approximation structures. It is shown that all  closed-loop signals remain bounded, the prescribed performance funnel is preserved for all initial conditions in the funnel, and that the tracking error converges asymptotically to zero under appropriate gain selection. The analysis provides a continuous-feedback synthesis that integrates prescribed performance control (PPC) with the RISE methodology. Simulation results on a ship-steering example illustrate the effectiveness of the proposed approach.

\end{abstract}

\section{Introduction} \label{sec:intro}

Nonlinear systems with uncertain dynamics, such as unmodeled nonlinearities, parametric uncertainty, or bounded disturbances, are unavoidable in practice. For such systems, prescribed performance control \cite{4639441} and funnel control \cite{ilchmann2002tracking} (FC) have emerged as central methodologies for enforcing transient and steady-state specifications, such as prescribed overshoot, convergence rate, and steady-state error bounds, through designer-selected performance envelopes and suitable barrier error transformations. Their appeal lies in the fact that they directly impose quantitative performance requirements under limited model information and often with relatively low implementation complexity \cite{bu2023prescribed, berger2025funnel}. Nevertheless, these methodologies typically guarantee confinement of the tracking error within a prescribed funnel, and convergence to a desired residual set, rather than asymptotic convergence to zero. In contrast, approaches based on the robust integral of the sign of the error \cite{1310482,xian2004continuous} provide robustness against uncertain nonlinear dynamics and constitute a standard route to asymptotic tracking with continuous control laws, but do not provide explicit transient performance guarantees of the PPC or FC type. The development of low-complexity continuous controllers that combine prescribed transient performance with asymptotic convergence to zero remains an important open problem.

Several recent works have addressed the simultaneous achievement of prescribed performance and asymptotic tracking for uncertain nonlinear systems. A discontinuous switching hybrid controller was considered in \cite{kanakis2019guaranteeing} for high-order nonlinear systems. In \cite{verginis2020asymptotic} a discontinuous control design was developed that achieves asymptotic tracking with prescribed transient response for a class of second-order unknown systems. Funnel control along with a barrier Lyapunov function  was utilized in \cite{liu2022asymptotic} in a recursive adaptive design. Asymptotic tracking via funnel control was also established in \cite{lee2019asymptotic} for a class of relative-degree one systems, by employing an alternative funnel-control formulation that allows the funnel boundary to converge to zero. Asymptotic prescribed performance schemes based on RISE have also been reported, typically together with adaptive or approximation mechanisms. Neural networks are employed along with RISE in \cite{fan2017asymptotic} for the control of high-order uncertain systems, in \cite{wang2017rise} for the control of servo mechanisms, in \cite{deng2022neural} for the control of hydraulic manipulators, and in \cite{cao2023neuroadaptive} for strict-feedback systems with bounded control gain matrix eigenvalues. Work \cite{cao2026achieving} combines RISE and PPC to accommodate generalized, non-monotonic performance specifications, utilizing  knowledge of the system dynamics in the control design. The barrier integral control technique was introduced in \cite{verginis2024asymptotic} for nonlinear systems with bounded control gain matrices, and further studied in \cite{verginis2024barrier} for reference trajectories with derivatives that converge to zero and no disturbances at steady state. The aforementioned results clearly highlight the significance of designing control laws for asymptotic tracking under prescribed performance requirements. However, despite their progress, a low-complexity continuous control design for high-order uncertain nonlinear systems that avoids neural or fuzzy approximation mechanisms and guarantees prescribed performance and asymptotic convergence of the tracking error remains  unaddressed.

It is worth noting that the aforementioned approaches come with important limitations. The designs in \cite{kanakis2019guaranteeing,verginis2020asymptotic} are discontinuous and may therefore induce chattering. Zero-converging funnels \cite{lee2019asymptotic} can exhibit similar numerical issues that stem from the product of ``large" and ``small" quantities (the funnel reciprocal and the error
signal). 
The RISE-based prescribed performance designs in \cite{fan2017asymptotic,wang2017rise,deng2022neural,cao2023neuroadaptive}, rely on neural or related approximation mechanisms and are therefore not of low implementation complexity. Moreover, their results are established over gain-dependent sets of initial conditions and therefore are semi-global, and address restrictive system classes. The approach of \cite{cao2026achieving}, although approximation-free, requires information about the dynamics. Finally, \cite{verginis2024asymptotic} addresses a stabilization problem for a limited class of systems, whereas \cite{verginis2024barrier} considers tracking under a restricted class of reference trajectories, requiring additional asymptotic conditions on the reference and disturbance terms.

Motivated by the above discussion, this paper considers asymptotic tracking with prescribed transient and steady-state performance for a class of uncertain high-order nonlinear systems. {We employ a barrier-type prescribed-performance transformation of a filtered tracking error within a RISE formulation to develop a continuous, low-complexity control law that enforces funnel-type transient behavior and asymptotic convergence of the tracking error.} In contrast to the discontinuous designs of \cite{kanakis2019guaranteeing,verginis2020asymptotic}, the resulting plant input is continuous, thus avoiding the possibility of induced chattering. Compared with the RISE-based prescribed-performance schemes of \cite{fan2017asymptotic,wang2017rise,deng2022neural,cao2023neuroadaptive}, the proposed design adopts less restrictive assumptions on the input gain, does not employ approximation structures, and establishes performance for all funnel-compatible initial conditions under the stated gain conditions rather than over gain-dependent initial sets. Unlike \cite{cao2026achieving}, our approach does not require knowledge of the system dynamics. Unlike \cite{lee2019asymptotic}, where asymptotic tracking is achieved for relative-degree-one systems by allowing the funnel boundary to converge to zero, the proposed design considers high-order nonlinear systems and does not require a vanishing funnel, thereby avoiding possible large inputs or numerical issues. Moreover, \cite{verginis2024asymptotic} addresses stabilization rather than trajectory tracking and assumes bounded input gains, unlike the present work. Finally, \cite{verginis2024barrier} assumes reference trajectories with derivatives converging to zero and no disturbances at steady state, while neither condition is imposed here.

The main contributions of this work can be summarized as follows. 
\begin{itemize}
    \item We propose a new control design that achieves asymptotic tracking with prescribed transient and steady-state performance for a class of uncertain high-order nonlinear systems.
    \item The proposed design provides a direct integration of prescribed performance barrier-type transformations with a RISE-type integral mechanism, thereby explicitly bridging the PPC methodology with asymptotic tracking control.
    \item  We establish boundedness of all closed-loop signals,  invariance of the prescribed performance funnel for all controller gains and all initial conditions satisfying the funnel constraints, and asymptotic convergence of the tracking error to zero under appropriate gain selection. In this sense, the proposed design extends the semi-global asymptotic results of existing works. %
 
    \item  The proposed controller is of low-complexity in the sense that it does not require observers or global approximators such as neural networks or fuzzy systems.
\end{itemize}

The rest of the paper is organized as follows. Section~\ref{sec:prelim} introduces the notation and preliminaries. 
Section~\ref{sec:pf} formulates the problem. Section~\ref{sec:main} presents the control design and analysis. Section~\ref{sec:sim} provides simulation results.  Section~\ref{sec:conclusion} concludes the paper.

\section{Preliminaries}\label{sec:prelim}

{The Euclidean norm is denoted by $\|\cdot\|$, $L_2$ and $L_\infty$ are the usual Lebesgue spaces, and ``a.e.'' means almost everywhere. For a measurable, locally essentially bounded vector field $h:\mathcal{D} \times [t_0,\infty) \to \mathbb{R}^n$, $\mathcal{D}\subset \mathbb{R}^n$, its Filippov regularization is \cite{filippov1988}}
\begin{align}
{
\mathsf K[h](z,t):=
\bigcap_{\delta>0}\,
\bigcap_{\mu(N)=0}
\overline{\operatorname{co}}\,
h\big(\mathcal B(z,\delta)\setminus N,t\big),}
\label{eq:Filippov-regularization}
\end{align}
{where $\mathcal B(z,\delta)$ is the open Euclidean ball, $\mu$ is the Lebesgue measure, and $\overline{\operatorname{co}}$ is the closed convex hull. An absolutely continuous $z:I\to\mathbb R^n$ is a Filippov solution if $\dot z(t)\in\mathsf K[h](z(t),t)$ a.e. If $V\in C^1$, the ordinary chain rule holds a.e. $\dot{\widetilde V}$ denotes its set of possible values along the regularized dynamics. With a slight abuse of notation, $\mathsf K[\dot r]$ denotes the regularization of the discontinuous right-hand side defining $\dot r$. Finally, $D^+r(t):=\limsup_{a\downarrow0}(r(t+a)-r(t))/a$ is the upper right Dini derivative.}

\section{Problem Formulation} \label{sec:pf}

{Consider a system of order $k\in\mathbb N$ with dynamics}
\begin{subequations} \label{eq:system}
\begin{align} 
	\dot{x}_i &= x_{i+1}, \ i\in\{1,\dots,k-1\} \\
	\dot{x}_k &= f(x,t) + g(x,t)u \\
        y &= x_1 
\end{align}
\end{subequations}
where  $x \coloneqq [x_1,\dots,x_k]^\top \in \mathbb{R}^k$ is the state,  $u \in\mathbb{R}$ is the control input, $y \in\mathbb{R}$ is the output, and $f:\mathbb{R}^{k}\times\mathbb{R}_{\geq 0}\to \mathbb{R}$, $g:\mathbb{R}^{k}\times\mathbb{R}_{\geq 0}\to\mathbb{R}$ are \textit{unknown} functions.
{Assume that $f,g\in C^2(\mathbb R^k\times\mathbb R_{\geq0})$ and that, for every compact $X\subset\mathbb R^k$, their partial derivatives in $(x,t)$ of total order at most two, including order zero, are bounded on $X\times\mathbb R_{\geq0}$.}
{The unknown drift $f$ represents modeling uncertainties, unknown parameters, and bounded exogenous effects, whereas $g$ is the uncertain input gain.} Further, we make the following assumption on $g(x,t)$, which describes a sufficient controllability condition for \eqref{eq:system} and is adopted in a wide variety of works (e.g., \cite{verginis2020asymptotic,verginis2024barrier,bechlioulis2016decentralized,xian2004continuous}).

\begin{assumption} \label{ass:g min}
{There exists $g_{\min}>0$ such that $g_{\min}\le g(x,t)$ for all $(x,t)\in \mathbb{R}^k\times \mathbb{R}_{\ge 0}$.}
\end{assumption}

{Let $y_d\in C^{k+2}(\mathbb R_{\geq0})$ denote the desired reference and define $y_d^{(i)}:=\mathrm d^iy_d/\mathrm dt^i$. Assume that $y_d^{(i)}$ is bounded for $i\in\{0,\ldots,k+2\}$. Define $e_i:=x_i-y_d^{(i-1)}$, $i\in\{1,\ldots,k\}$. Then $\dot e_i=e_{i+1}$ for $i\in\{1,\ldots,k-1\}$ and $\dot e_k=f(x,t)+g(x,t)u-y_d^{(k)}$. In particular, $e_1=y-y_d$.} The control objective is:
\begin{itemize}
\item  achieve asymptotic tracking, i.e., $\lim_{t\to\infty} e_1(t)=0$, using a {continuous plant input}, and
\item enforce prescribed performance bounds, i.e., guarantee that $|e_1(t)|<\phi(t)$ for all $t\ge 0$, where $\phi:\mathbb{R}_{\ge 0}\to\mathbb{R}_{>0}$ is a designer-chosen performance function that is positive, continuously differentiable, strictly decreasing, with a positive limit $\lim_{t \to \infty}\phi(t)=\phi_{\infty}>0$ and $\phi(0) > |e_1(0)|$.
\end{itemize}

\section{Main results}\label{sec:main}

\subsection{Control design}

{To solve the aforementioned problem, we propose a novel
integration of PPC and RISE. Our design employs (i) reciprocal barrier functions that retain the error $e_1(t)$ within the funnel dictated by $\phi(t)$, and (ii) the integral of the sign of $e_1(t)$ that enforces asymptotic tracking by implicitly compensating for the upper bounds of the drift dynamics within the funnel. Unlike \cite{verginis2020asymptotic}, the proposed algorithm
features \textit{continuous} feedback. Moreover, compared with
\cite{verginis2024asymptotic,verginis2024barrier}, it accommodates
a broader class of dynamics and reference trajectories.}

{Let $\bar e=[e_1,\ldots,e_k]^\top=x-\bar y_d$, where $\bar y_d=[y_d,\dot y_d,\ldots,y_d^{(k-1)}]^\top$.}
We define the auxiliary filtered errors $\bar{\eta} \coloneqq [\eta_1,\dots,\eta_k]^\top$ as
\begin{align} \label{eq:eta_i}
\eta_i &\coloneqq \left( \frac{\textup{d}}{\textup{d}t} + \lambda \right)^{i-1}e_1 \ = 
 \  \sum_{\ell=0}^{i-1} \left( \begin{matrix}
i- 1\\
\ell
\end{matrix} \right)
\lambda^\ell e_{i-\ell} 
\end{align}
{for all $i\in\{1,\dots,k\}$ and a constant $\lambda>0$, where the power denotes repeated application of the linear operator $\mathrm d/\mathrm dt+\lambda$. These filtered errors satisfy}
\begin{align} \label{eq:eta_i+1}
\eta_{i+1} = \dot{\eta}_i + \lambda \eta_i
\end{align}
for all $i\in\{1,\dots,k-1\}$.
In the sequel, we use the notation $\bar{e}_i \coloneqq [e_1,\dots,e_i]^\top$, $\bar{\eta}_i \coloneqq [\eta_1,\dots,\eta_i]^\top \in \mathbb{R}^i$, for $i\in\{1,\dots,k\}$, with $\bar{e}_1 = e_1$, $\bar{\eta}_1 = \eta_1$, $\bar{e}_k = \bar{e}$, $\bar{\eta}_k = \bar{\eta}$. By recursively applying \eqref{eq:eta_i}, we conclude that 
{$e_i = \sum_{j=1}^i \beta_{ij} \eta_j$, where $\beta_{ij}=\binom{i-1}{j-1}(-\lambda)^{i-j}$, for all $i \in \{1,\dots,k\}$.}
 The aforementioned expression
 defines the mappings  
 \begin{align} \label{eq:Beta mapping}
 \bar{e}_i = B_i \bar{\eta}_i
 \end{align}
 {where $B_i \in \mathbb{R}^{i \times i}$ is lower triangular with $[B_i]_{rj}=\beta_{rj}$ for $j\leq r$ and zero otherwise.}

We now describe the proposed control design. 
First, in order to satisfy the prescribed performance bounds $|e_1(t)| < \phi(t)$, we aim to bound the filtered error $\eta_k(t)$ in a corresponding prescribed funnel. 
{Choose a twice continuously differentiable, decreasing function $\rho:\mathbb{R}_{\geq0}\to\mathbb{R}_{>0}$ with bounded first and second derivatives such that, for constants $\rho_{\min},\rho_{\max}>0$, $\rho_{\min}\leq\rho(t)\leq\rho_{\max}$ and $\rho(0)>|\eta_k(0)|$.}
We then define the normalized and transformed errors 
\begin{align}\label{eq:ksi and epsilon}
    \xi \coloneqq \frac{\eta_k}{\rho(t)},~\varepsilon \coloneqq \mathbb{T}(\xi) \coloneqq \frac{1}{2}\ln\left(\frac{1+ \xi }{1-\xi}\right).
\end{align}
The {barrier-type} map $\mathbb{T}:(-1,1)\to\mathbb{R}$ is an increasing function with the property $\lim_{\ast \to \pm 1}\mathbb{T}(\ast) = \pm \infty$. Such a property is exploited in the control design, which, except for asymptotic tracking, aims to guarantee $\xi(t) \in (-1,1)$ by retaining the boundedness of $\varepsilon(t)$. Note that $\mathbb{T}^{-1}(\ast) = \tanh(\ast)$
{and that bounded $\varepsilon$ implies $|\xi|=|\tanh(\varepsilon)|<1$, equivalently $|\eta_k(t)|<\rho(t)$.}

We now design the control law as 
\begin{align} \label{eq:control input}
    u(t) = -(k_1 &+ 1)(\varepsilon(t) - \varepsilon(0))\nonumber\\ &- \int_0^t \big((k_1 + 1)\alpha  \varepsilon(\tau) + k_2\textup{sgn}(\varepsilon(\tau)) \big)\textup{d}\tau, 
\end{align}
where $k_1$, $\alpha$, $k_2$ are positive control gains and $\textup{sgn}(\cdot)$ denotes  the signum function.

The prescribed bound $|\eta_k(t)|<\rho(t)$ is translated to a bound for $e_1(t)$ as follows:
\begin{lemma}\cite{verginis2024barrier} \label{lemma:1}
Assume that $|\eta_k(t)| < \bar{\rho}$ for a constant $\bar{\rho} > 0$ and all $t \geq 0$. Then, there exist positive constants $\bar{e}_{i,1}$, $\bar{e}_{i,2}$, and $\lambda_0 \in (0,\lambda)$ such that, for every $i\in\{1,\dots,k-1\}$ and all $t\geq 0$:
\begin{align}
|e_i(t)| \leq \bar{e}_{i,1} \exp(-\lambda_0 t) + \bar{e}_{i,2}.
\end{align}
\end{lemma}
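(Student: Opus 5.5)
The plan is to exploit the cascade structure of the filtered errors. By \eqref{eq:eta_i+1}, each $\eta_i$ solves the stable first-order linear ODE $\dot\eta_i=-\lambda\eta_i+\eta_{i+1}$, $i\in\{1,\dots,k-1\}$, with input $\eta_{i+1}$. Starting from the hypothesis $|\eta_k(t)|<\bar\rho$, we propagate bounds downward from $i=k-1$ to $i=1$ via the variation-of-constants formula:
\begin{align*}
\eta_i(t)=e^{-\lambda t}\eta_i(0)+\int_0^t e^{-\lambda(t-\tau)}\eta_{i+1}(\tau)\,\mathrm d\tau .
\end{align*}
The solutions are absolutely continuous, and the formula holds wherever the trajectory is defined. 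Since the hypothesis is stated for all $t\ge 0$, no finite-escape issue arises.

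First, for $i=k-1$, the bound on $\eta_k$ gives $|\eta_{k-1}(t)|\le |\eta_{k-1}(0)|e^{-\lambda t}+\bar\rho/\lambda$. Inductively, suppose $|\eta_{i+1}(t)|\le a_{i+1}e^{-\lambda_0 t}+b_{i+1}$ with some fixed $\lambda_0\in(0,\lambda)$; for instance, $\lambda_0=\lambda/2$ is chosen once and for all. Then the convolution yields
\begin{align*}
|\eta_i(t)|\le |\eta_i(0)|e^{-\lambda t}+a_{i+1}\frac{e^{-\lambda_0 t}-e^{-\lambda t}}{\lambda-\lambda_0}+\frac{b_{i+1}}{\lambda}.
\end{align*}
This is of the form $a_ie^{-\lambda_0 t}+b_i$ with $a_i=|\eta_i(0)|+a_{i+1}/(\lambda-\lambda_0)$ and $b_i=b_{i+1}/\lambda$. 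Using strictly $\lambda_0<\lambda$ is exactly what avoids the resonant $te^{-\lambda t}$ terms that appear if one insists on rate $\lambda$; this is the only delicate point, and the reason the statement has $\lambda_0\in(0,\lambda)$. One can also absorb $te^{-\lambda t}\le e^{-\lambda_0 t}/(e(\lambda-\lambda_0))$ if one prefers to keep rate $\lambda$ in intermediate steps. The base of the induction is $a_k=0$, $b_k=\bar\rho$.

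Finally, we translate to the errors $e_i$ using the mapping \eqref{eq:Beta mapping}, $\bar e_i=B_i\bar\eta_i$, i.e. $e_i=\sum_{j=1}^i\beta_{ij}\eta_j$. For $i\le k-1$ this involves only $\eta_1,\dots,\eta_{k-1}$, which all obey exponential-plus-constant bounds. Hence
\begin{align*}
|e_i(t)|\le \sum_{j=1}^i|\beta_{ij}|\big(a_je^{-\lambda_0t}+b_j\big)=:\bar e_{i,1}e^{-\lambda_0t}+\bar e_{i,2}.
\end{align*}
The constants depend on $\lambda$, $\bar\rho$, and the initial values $\eta_j(0)$, which are finite linear combinations of $x(0)-\bar y_d(0)$. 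To make $\bar e_{i,1}$ strictly positive as stated, one may enlarge it by any positive amount. No step is a real obstacle; the only care needed is the choice of $\lambda_0<\lambda$ in the convolution estimate and the bookkeeping that the coefficients remain finite.
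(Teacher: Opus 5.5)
Your proof is correct. The convolution identity $\int_0^t e^{-\lambda(t-\tau)}e^{-\lambda_0\tau}\,\mathrm d\tau=(e^{-\lambda_0t}-e^{-\lambda t})/(\lambda-\lambda_0)$, the downward recursion from $a_k=0$, $b_k=\bar\rho$, and the transfer through $\bar e_i=B_i\bar\eta_i$ all check out.

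The paper does not prove this lemma; it imports it from \cite{verginis2024barrier}, so there is no in-paper proof to match. The route suggested by the paper's own notation is the vector form $\dot{\bar\eta}_{k-1}=-A\bar\eta_{k-1}+B\eta_k$ from Step~1. There $-A$ is a single Jordan block at $-\lambda$, so $\|e^{-At}\|\le\kappa e^{-\lambda_0 t}$ for any $\lambda_0<\lambda$. This gives $\|\bar\eta_{k-1}(t)\|\le\kappa\|\bar\eta_{k-1}(0)\|e^{-\lambda_0 t}+\kappa\bar\rho/\lambda_0$, and the claim follows through $B_{k-1}$. That route is shorter, and the $t^m e^{-\lambda t}$ entries of the Jordan exponential make the need for $\lambda_0<\lambda$ immediate, but $\kappa$ stays implicit.

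Your scalar recursion instead gives explicit constants. With $b_j=\bar\rho/\lambda^{k-j}$ and $|\beta_{ij}|=\binom{i-1}{j-1}\lambda^{i-j}$, you obtain $\bar e_{i,2}=2^{i-1}\bar\rho/\lambda^{k-i}$. This is exactly the residual set the paper quotes from \cite{bechlioulis2016decentralized}, with $\rho_\infty$ in place of $\bar\rho$. The same sum with $i=k$ also bounds $e_k$; Step~2 of the theorem needs that bound to confine $x$, but the lemma does not state it.

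Finally, none of your constants depends on the time horizon, so your argument holds verbatim on any interval $[0,T)$ on which $|\eta_k|<\bar\rho$. That is precisely how the lemma is invoked in Step~2: on $[0,t_{\max})$, before $t_{\max}=\infty$ is known. It is worth saying this explicitly rather than leaning on the hypothesis holding for all $t\ge 0$.
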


The constants $\bar e_{i,1}$, $\bar e_{i,2}$, and $\lambda_0$ depend on the initial errors, $\lambda$, and the bound $\bar\rho$ induced by $\rho(t)$. Hence $\rho(t)$ can be designed to achieve a prescribed bound $|e_1(t)|<\phi(t)$. In particular, if $\rho(t)=(\rho_0-\rho_\infty)e^{-\delta t}+\rho_\infty$ and $\lambda>\delta$, then each $|e_i(t)|$ approaches $\{e_i\in\mathbb R:|e_i|\leq2^{i-1}\rho_\infty/\lambda^{k-i}\}$ at least as fast as $e^{-\delta t}$ \cite[Prop.~1]{bechlioulis2016decentralized}.

\begin{remark}
{Unlike traditional funnel designs \cite{4639441,ilchmann2002tracking,berger2025funnel}, the proposed design achieves $\eta_k(t)\to0$ for every funnel-compatible initial condition without requiring $\rho(t)\to0$. The integral term in \eqref{eq:control input} accumulates compensation for $f(x,t)$ and $y_d^{(k)}$. Related smooth error integrals are used in \cite{verginis2024asymptotic,verginis2024barrier}, whereas \cite{verginis2020asymptotic} uses a discontinuous multivariable sign term.}
\end{remark}

\subsection{Stability proof}

{This section provides the funnel-confinement and asymptotic-convergence guarantees of the proposed design.}

\begin{theorem}
Consider the closed-loop system \eqref{eq:system}-\eqref{eq:control input}.
The proposed control design guarantees boundedness of all closed-loop signals and prescribed performance for the filtered error, i.e. $|\eta_k(t)|<\rho(t)$, for all $t\geq 0$. Moreover, there exist positive constants $\bar{e}_{i,1}$, $\bar{e}_{i,2}$, and $\lambda_0\in(0,\lambda)$ such that $|e_i(t)| \le \bar{e}_{i,1}e^{-\lambda_0 t}+\bar{e}_{i,2}$, for all $t\geq 0$, and every $i\in\{1,\dots,k-1\}$.
Furthermore, if the control gains satisfy $k_1>\bar{k}_{1}$ and $k_2>\bar{k}_{2}$, for some positive constants $\bar{k}_{1}$ and $\bar{k}_{2}$ defined in the proof, then $\lim_{t\to\infty}\eta_k(t)=0$ and $\lim_{t\to\infty}e_i(t)=0$, for every $i\in\{1,\dots,k\}$.
\end{theorem}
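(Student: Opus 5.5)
The argument has two parts. The first gives funnel invariance and boundedness for any positive gains. The second is a RISE-type Lyapunov argument for asymptotic convergence.

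\textbf{Part 1: local existence and the error dynamics.} I would first establish existence of a maximal Filippov solution on $[0,\tau_{\max})$. The closed-loop state is $(x,\nu)$, where $\nu$ is the integral state in \eqref{eq:control input}, defined on the open set where $|\eta_k|<\rho(t)$. The right-hand side is measurable and locally essentially bounded there, so a solution exists, and $\xi(t)\in(-1,1)$ on $[0,\tau_{\max})$.

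Differentiating $\eta_k$ gives
\[
\dot\eta_k=f(x,t)+g(x,t)u-y_d^{(k)}+\sum_{\ell\ge1}\binom{k-1}{\ell}\lambda^\ell e_{k-\ell+1}.
\]
Differentiating $u$ gives $\dot u=-(k_1+1)(\dot\varepsilon+\alpha\varepsilon)-k_2\,\mathrm{sgn}(\varepsilon)$, and
\[
\dot\varepsilon=\frac{1}{\rho(1-\xi^2)}\big(\dot\eta_k-\xi\dot\rho\big).
\]

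I would work with the auxiliary signal $r:=\dot\varepsilon+\alpha\varepsilon$. Since $u=-(k_1+1)(\varepsilon-\varepsilon(0))-\nu$, it is convenient to rewrite the $\dot\eta_k$ equation so that $\varepsilon$ and $\nu$ enter linearly with the positive gain $g$.

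\textbf{Part 1: funnel invariance.} Suppose $\varepsilon$ were unbounded on $[0,\tau_{\max})$. By Lemma~\ref{lemma:1}, which applies with $\bar\rho=\rho_{\max}$ on this interval, the states $e_1,\dots,e_{k-1}$ stay bounded. Hence $x$ lies in a compact set, so $f$, $g$, their derivatives, and $y_d^{(k)}$ are bounded.

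Consider $V_0=\tfrac12\varepsilon^2$ together with the integral state. Near $|\xi|\to1$, the term $g u$ contains $-g(k_1+1)\varepsilon$ and dominates. The danger is that $\nu$ grows, but $\nu$ grows at most linearly in $t$ on a finite interval, since $|\dot\nu|\le(k_1+1)\alpha|\varepsilon|+k_2$.

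More cleanly, I would argue by comparison on $\varepsilon$. Take $V=\tfrac12\varepsilon^2+\tfrac{1}{2}\mu\,(\text{something in }\nu)$; alternatively, a differential inequality for $|\varepsilon|$ should show that $|\varepsilon|$ grows at most exponentially. Either route rules out blow-up in finite time, hence $\tau_{\max}=\infty$.

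For boundedness on $[0,\infty)$, I would use the Lyapunov function below, which itself yields $\varepsilon\in L_\infty$. This part needs the most care, since invariance is claimed for all gains. I would try to show that the pair $(\varepsilon,\int)$ forms a stable linear-like system driven by bounded disturbances, with damping $g\,(k_1+1)/(\rho(1-\xi^2))$ that is positive and bounded below.

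\textbf{Part 2: asymptotic convergence.} Mirroring Xian et al., I would take $r=\dot\varepsilon+\alpha\varepsilon$ and compute $\dot r$. It has the form
\[
\dot r=-J(t)\big((k_1+1)r+k_2\,\mathrm{sgn}(\varepsilon)\big)+\tilde N+N_d-\dot J J^{-1}(\cdots),
\]
with $J=g/(\rho(1-\xi^2))\ge g_{\min}/\rho_{\max}>0$. Here $\tilde N$ satisfies $\|\tilde N\|\le\varrho(\|z\|)\|z\|$ with $z=[\varepsilon,r]$, and $N_d$ is bounded with bounded derivative. These properties use the $C^2$ assumptions on $f,g$, the boundedness of $y_d$ up to order $k+2$, and the already established confinement to a compact set.

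The candidate Lyapunov function is
\[
V=\tfrac12\varepsilon^2+\tfrac12 J^{-1}r^2+P,
\]
with the standard RISE auxiliary
\[
P=k_2|\varepsilon(0)|-\varepsilon(0)N_d(0)-\int_0^t L,\qquad L=r\,(N_d-k_2\,\mathrm{sgn}(\varepsilon)).
\]
Choosing $k_2>\bar k_2:=\sup|N_d|+\alpha^{-1}\sup|\dot N_d|$ makes $P\ge0$.

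Along Filippov solutions this gives $\dot V\le-\alpha\varepsilon^2-(k_1+1-\cdots)r^2+\varrho(\|z\|)\|z\|\,|r|$. Completing squares with $k_1>\bar k_1$ yields $\dot V\le -c\|z\|^2$ on a sublevel set. Because the state is already confined to a compact set by the funnel (Part 1), the usual semiglobal restriction becomes global.

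LaSalle–Yoshizawa then gives $z\to0$, so $\varepsilon\to0$ and $\eta_k\to0$. Finally, $\eta_k$ is the output of the stable chain $(\mathrm d/\mathrm dt+\lambda)^{k-1}$ driven to zero, so $e_i\to0$ for all $i$.

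\textbf{Main obstacle.} The hardest step is bounding the terms involving $\dot J$. This derivative contains $\dot\xi$, which blows up near the funnel boundary. Handling it requires the Part 1 bound on $\varepsilon$ to be uniform in time, so that $1-\xi^2$ stays bounded away from $0$.
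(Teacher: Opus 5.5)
Your Part 2 is essentially the paper's Step 3: your $J^{-1}$ is the paper's $m=\rho/(J_{\mathbb T}(\xi)g)$, and your $P$ and $\bar k_2$ are the standard RISE ones. The gap comes before it. You never obtain a bound on $\varepsilon$ that is uniform in time, i.e., $|\xi(t)|\le\bar\xi<1$ for all $t\ge0$. That bound is both the funnel-invariance/boundedness claim of the theorem and the hypothesis your Part 2 rests on.

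The comparison argument in Part 1 gives bounds that grow with $t$; it must couple $\varepsilon$ with $\nu$, since $|\dot\nu|$ depends on $|\varepsilon|$. It excludes finite escape and nothing more. Deferring uniform boundedness to $V=\frac12\varepsilon^2+\frac12J^{-1}r^2+P$ is circular. The estimate $\|\tilde N\|\le\varrho(\|z\|)\|z\|$, the bounds on $\dot J$, and the lower bound on $J^{-1}$ hold only where $|\varepsilon|$ is bounded, because $J_{\mathbb T}(\xi)=\cosh^2(\varepsilon)$. The funnel $|\eta_k|<\rho$ confines $x$ to a compact set but not $\varepsilon$, so the claim that the semiglobal restriction ``becomes global'' is unjustified. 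Run on sublevel sets, this $V$ gives at best a semi-global result. There, boundedness itself needs $k_2>\bar k_2$ and $k_1$ above an initial-data-dependent threshold, whereas the theorem asserts boundedness and funnel invariance without $k_1>\bar k_1$, $k_2>\bar k_2$.

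Your heuristic that $(\varepsilon,\nu)$ is a stable linear-like system is also not automatic. Here $\dot\varepsilon=\frac{J_{\mathbb T}(\xi)}{\rho}\big(-g(k_1+1)\varepsilon-g\nu+b\big)$ with $b$ bounded. A diagonal quadratic form in $(\varepsilon,\nu)$ therefore leaves the cross term $-J_{\mathbb T}(\xi)\frac{g}{\rho}\varepsilon\nu$, whose coefficient is state-dependent and unbounded and cannot be cancelled by $\nu\dot\nu$.

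The missing idea, which is the paper's Step 2, is the change of variables $\zeta:=(k_1+1)\varepsilon+\nu$, so that $u=-\zeta+(k_1+1)\varepsilon(0)$, together with $W=\frac12\nu^2+\frac12\zeta^2$. All $J_{\mathbb T}$-scaled terms then sit in the $\zeta$-equation. There, $\zeta\dot\zeta$ contains the damping $-(k_1+1)J_{\mathbb T}(\xi)\frac{g}{\rho}\zeta^2$ and the uncertainty $J_{\mathbb T}(\xi)|\zeta|\bar F_e$. The constant $\bar F_e$ depends only on the bound on $x$, which Lemma~\ref{lemma:1} and $|\eta_k|<\rho_{\max}$ give independently of $t_{\max}$. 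The cross terms $\pm\alpha(k_1+1)\zeta\varepsilon$ from $\zeta\dot\zeta$ and $\nu\dot\nu$ cancel, leaving $\alpha\zeta^2-\alpha\nu^2$. Since $J_{\mathbb T}\ge1$, the $\alpha\zeta^2$ term is absorbed once $k_1+1>\alpha\rho_{\max}/g_{\min}$, a condition independent of the initial state. This yields $\dot W\le-\tilde\alpha\nu^2-J_{\mathbb T}(\xi)\big(\tilde\ell_1\zeta^2-\bar F_b\big)$ a.e. Hence $W$ cannot grow while $|\zeta|\ge\sqrt{\bar F_b/\tilde\ell_1}$, a radius independent of $J_{\mathbb T}$. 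Otherwise, $\dot\nu=\alpha(\zeta-\nu)+k_2v$ with $v\in\mathsf K[\operatorname{sgn}](\varepsilon)$ keeps $\nu$ bounded. Thus $\varepsilon=(\zeta-\nu)/(k_1+1)$ is uniformly bounded, $t_{\max}=\infty$, and the compact tube needed in Part 2 exists.

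Part 2 then goes through with one adjustment. The uncertain terms depend on $\bar\eta_{k-1}$, which does not vanish with $z=[\varepsilon,r]$. You must either define $N_d(t)$ along the actual $\bar\eta_{k-1}(t)$, whose derivative is then bounded via the $C^2$ assumptions and bounded $\dot{\bar\eta}_{k-1}$. Or, as the paper does, include $\bar\eta_{k-1}$ in the state and add $\kappa_0\bar\eta_{k-1}^\top C\bar\eta_{k-1}$ to $V$.
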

\begin{proof}
    The proof consists of three steps. We first prove the existence of at least one %
    Filippov solution to the closed-loop system for a maximal time interval $[0,t_{\max})$. Second, we prove the boundedness of $\varepsilon(t)$, leading to $t_{\max} = \infty$, and finally, we prove that $\lim_{t\to\infty}e_i(t) = 0$.
    
   \noindent\textbf{Step 1}. Consider $\sigma \coloneqq \int_0^t \big((k_1 + 1)\alpha  \varepsilon(\tau) + k_2\textup{sgn}(\varepsilon(\tau)) \big)\textup{d}\tau$ and the stacked variable vector $q \coloneqq [\bar{\eta}_{k-1}^\top, \sigma, \xi] \in \mathbb{R}^{k+1}$. {For $k=1$, $\bar\eta_0$ denotes the empty vector and $q=[\sigma,\xi]^\top$.} Differentiation of the elements of $q$ yields
   \begin{subequations}
   \begin{align}
   \dot{\bar\eta}_{k-1}&=-A\bar\eta_{k-1}+B\eta_k,\qquad {k\geq2,}\\
   \dot\sigma&=(k_1+1)\alpha\varepsilon+k_2\operatorname{sgn}(\varepsilon),\\
   \dot\xi&=\frac{f(x,t)+g(x,t)u-y_d^{(k)}-\dot\rho(t)\xi+\Lambda(e)}{\rho(t)}.
   \end{align}
   \end{subequations}
   where $\Lambda(e) \coloneqq \sum_{\ell=1}^{k-1} \binom{k-1}{\ell}\lambda^\ell e_{k-\ell+1}$, and, for $k\geq2$,
  \begin{align} \label{eq:A and B}
  A = \begin{bmatrix}
  \lambda & -1 & 0 & \dots & 0 \\
  0 & \lambda & -1 & \dots & 0 \\
  \vdots & \ddots & \ddots & \vdots & \vdots \\
  0 & 0 & \dots & 0 & \lambda
  \end{bmatrix}, \quad
  B = \begin{bmatrix}0 \\ \vdots \\ 0 \\ 1\end{bmatrix}.
  \end{align}
   {When $k=1$, the first equation and the matrices $A,B$ are absent.}
    {Invoking $x=\bar e+\bar y_d(t)$, $u=-(k_1+1)(\varepsilon-\varepsilon(0))-\sigma$, \eqref{eq:Beta mapping}, and \eqref{eq:ksi and epsilon}, the augmented dynamics can be written as $\dot q=F_q(q,t)$. The plant input $u$ is continuous, but $F_q$ contains $\operatorname{sgn}(\varepsilon)$ through $\dot\sigma$ and is discontinuous at $\varepsilon=0$. Thus, following \cite{filippov1988,cortes2008discontinuous}, the augmented dynamics are}
    \begin{align} \label{eq:q dot}
        \dot{q} \in \mathsf{K}[F_q](q,t),
    \end{align} 
    where $\mathsf{K}[F_q](q,t)$ is the Filippov regularization of $F_q(q,t)$ and replaces $\textup{sgn}(\varepsilon)$ with the set 
    \begin{align} \label{eq:sign set}
        \mathsf{K}[\textup{sgn}](\varepsilon) = \begin{cases}
          \{1\}, & \varepsilon > 0 \\
          \{-1\}, & \varepsilon < 0 \\
          [-1,1], & \varepsilon = 0.
      \end{cases}
  \end{align}
Consider now the open set $\Omega_q \coloneqq \mathbb{R}^{k} \times (-1,1) $. Since $\rho(0) > |\eta_k(0)|$, %
it holds that $q(0) \in \Omega_q$. %
{The regularized right-hand side is measurable and locally essentially bounded on $\Omega_q$. Since $q(0)\in\Omega_q$, standard Filippov existence theory \cite{filippov1988} gives at least one solution on a maximal interval $[0,t_{\max})$. On that interval, $q(t)\in\Omega_q$, $\dot q(t)\in\mathsf K[F_q](q,t)$ almost everywhere, and}
\begin{align}
{|\eta_k(t)|=\rho(t)|\xi(t)|<\rho(t)\leq\rho_{\max}.}
\label{eq:local-funnel}
\end{align}

   \noindent \textbf{Step 2}.  Let now $\zeta \coloneqq (k_1+1)\varepsilon + \sigma$ and note that $\zeta(t)$ is a function of $q$ and is hence well-defined for all $t\in[0,t_{\max})$ due to Step 1. The derivative of $\varepsilon$ is
\begin{align}\label{eq:epsilon dot}
\dot\varepsilon=J_{\mathbb T}(\xi)\dot\xi
=\frac{J_{\mathbb T}(\xi)}{\rho(t)}\big(&f(x,t)+g(x,t)u-y_d^{(k)}
-\dot\rho(t)\xi\nonumber\\[-1mm]
&+\Lambda(e)\big),
\end{align}
where $J_{\mathbb T}(*)\coloneqq\mathrm d\mathbb T(*)/\mathrm d*
=1/(1-*^2)$. Consider the positive definite function
$W(\sigma,\zeta):=\frac12\sigma^2+\frac12\zeta^2$. Since $W$ is
continuously differentiable and the Filippov trajectory is absolutely
continuous, $W(\sigma(t),\zeta(t))$ is absolutely continuous and $\dot W\overset{\mathrm{a.e.}}{\in}\dot{\widetilde W}(\sigma,\zeta)
\coloneqq\zeta\mathsf K[\dot\zeta]+\sigma\mathsf K[\dot\sigma].$ 
Set $\tilde k_1\coloneqq k_1+1$. Algebraic manipulation of the two
terms gives
\begin{align*}
\sigma\mathsf K[\dot\sigma]
&=\sigma\tilde k_1\alpha\varepsilon
  +\sigma k_2\mathsf K[\operatorname{sgn}](\varepsilon)\\
&=\alpha\sigma\zeta-\alpha\sigma^2
  +k_2\sigma\mathsf K[\operatorname{sgn}](\varepsilon)\\
&=\alpha\zeta^2-\alpha\tilde k_1\zeta\varepsilon-\alpha\sigma^2
  +k_2\sigma\mathsf K[\operatorname{sgn}](\varepsilon),\\
\zeta\mathsf K[\dot\zeta]
&=\tilde k_1\zeta\frac{J_{\mathbb T}(\xi)}{\rho(t)}
 \big(f(x,t)+g(x,t)u-y_d^{(k)}-\dot\rho(t)\xi\big)\\[-1mm]
&\quad+\tilde k_1\alpha\zeta\varepsilon
 +\tilde k_1\zeta\frac{J_{\mathbb T}(\xi)}{\rho(t)}\Lambda(e)
 +k_2\zeta\mathsf K[\operatorname{sgn}](\varepsilon).
\end{align*}
Moreover, Lemma~\ref{lemma:1}, $|\eta_k(t)|<\rho_{\max}$ on
$[0,t_{\max})$, and boundedness of $\bar y_d(t)$ imply bounds on
$x(t)$ and $\bar e(t)$ independent of $t_{\max}$. Using the
boundedness of $\dot\rho$ and $\xi$, we can choose a positive constant
$\bar F$, independent of $t_{\max}$, such that
$\bar F\geq|\tilde k_1\rho^{-1}(f-y_d^{(k)}-\dot\rho\xi+\Lambda(e))|$.
Then, for every $w\in\dot{\widetilde W}$ and almost all
$t\in[0,t_{\max})$, it holds that $w\leq-\alpha\sigma^2+\alpha\zeta^2
+k_2(|\sigma|+|\zeta|)
+\tilde k_1\zeta\frac{J_{\mathbb T}(\xi)}{\rho(t)}g(x,t)u
+|\zeta J_{\mathbb T}(\xi)|\bar F.$ Substituting $u=-\zeta+\tilde k_1\varepsilon(0)$ from
\eqref{eq:control input}, using Assumption~\ref{ass:g min}, and
setting $\underline g\coloneqq g_{\min}/\rho_{\max}>0$ and $\bar F_e\coloneqq\bar F+\tilde k_1^2|\varepsilon(0)|
\sup_{t\in[0,t_{\max})}\frac{g(x(t),t)}{\rho(t)},$
yields $w\leq-\alpha\sigma^2+\alpha J_{\mathbb T}(\xi)\zeta^2
+k_2(|\sigma|+|\zeta|)
-\underline g\tilde k_1J_{\mathbb T}(\xi)\zeta^2
+J_{\mathbb T}(\xi)|\zeta|\bar F_e.$ The supremum in $\bar F_e$ is finite because the preceding bounds
confine $x$ to a compact set, $g$ is uniformly bounded there, and $\rho(t)\geq\rho_{\min}>0$. By choosing
$\tilde k_1=k_1+1>\alpha/\underline g$ and defining
$\ell_1\coloneqq\underline g\tilde k_1-\alpha>0$, we obtain $w\leq-\alpha\sigma^2-\ell_1J_{\mathbb T}(\xi)\zeta^2
+k_2(|\sigma|+|\zeta|)
+J_{\mathbb T}(\xi)|\zeta|\bar F_e.$  For positive $q_1,q_2$ satisfying
$q_1<2\alpha$ and $q_1+q_2<2\ell_1$, completing squares gives $k_2(|\sigma|+|\zeta|)
\leq\frac{q_1}{2}(\sigma^2+\zeta^2)+\frac{k_2^2}{q_1}
\leq\frac{q_1}{2}\sigma^2
+J_{\mathbb T}(\xi)\left(\frac{q_1}{2}\zeta^2+\frac{k_2^2}{q_1}\right),$ $
|\zeta|\bar F_e\leq\frac{q_2}{2}\zeta^2
+\frac{\bar F_e^2}{2q_2}.$ Consequently, since $\dot W\in\dot{\widetilde W}$ a.e.,
$\dot W\leq-\tilde\alpha\sigma^2
-J_{\mathbb T}(\xi)
\big(\tilde\ell_1\zeta^2-\bar F_b\big)$ a.e., where $\tilde\alpha\coloneqq\alpha-q_1/2>0$,
$\tilde\ell_1\coloneqq\ell_1-(q_1+q_2)/2>0$, and
$\bar F_b\coloneqq k_2^2/q_1+\bar F_e^2/(2q_2)$.

For completeness, set $r_\zeta\coloneqq\sqrt{\bar F_b/\tilde\ell_1}$.
When $|\zeta|\geq r_\zeta$, the preceding inequality gives
$\dot W\leq-\tilde\alpha\sigma^2\leq0$ a.e. When
$|\zeta|<r_\zeta$, the identity
$\dot\sigma=\alpha(\zeta-\sigma)+k_2v$,
$v\in\mathsf K[\operatorname{sgn}](\varepsilon)$, gives
$D^+|\sigma|\leq-\alpha|\sigma|+\alpha r_\zeta+k_2$. Thus positive
constants $\bar\sigma,\bar\zeta$ exist such that
$|\sigma(t)|\leq\bar\sigma$ and $|\zeta(t)|\leq\bar\zeta$, and hence
$|\varepsilon(t)|\leq\bar\varepsilon$ for some $\bar\varepsilon>0$,
because $\varepsilon=(\zeta-\sigma)/(k_1+1)$. Since
$\xi=\tanh(\varepsilon)$, $|\xi(t)|\leq\bar\xi<1$. Lemma~\ref{lemma:1}
and \eqref{eq:ksi and epsilon} then bound $q$ independently of
$t_{\max}$, i.e., $\|q(t)\|\leq\bar q$ for some $\bar q>0$, and
confine it to a compact subset of $\Omega_q$. The
continuation property therefore excludes finite escape, therefore 
$t_{\max}=\infty$ and all closed-loop signals are bounded. Finally,
\eqref{eq:epsilon dot} shows that
$|\dot\varepsilon(t)|\leq\bar\varepsilon_d$ a.e. for some
$\bar\varepsilon_d>0$.

\noindent\textbf{Step 3}. We now show that $\varepsilon(t)$, and hence
$\xi(t)$ and $\bar e(t)$, converge to zero. Manipulating
\eqref{eq:epsilon dot} gives $m(\bar\eta_{k-1},\varepsilon,t)\dot\varepsilon
=u+\frac{f(x,t)}{g(x,t)}
-\frac{y_d^{(k)}(t)+\dot\rho(t)\xi-\Lambda(e)}{g(x,t)},$ where
\begin{align}\label{eq:m}
m(\bar\eta_{k-1},\varepsilon,t)
&\coloneqq\frac{\rho(t)}{J_{\mathbb T}(\xi)g(x,t)}\nonumber\\[-1mm]
&=\frac{\rho(t)}{J_{\mathbb T}(\tanh(\varepsilon))
g(\bar e+\bar y_d(t),t)}.
\end{align}
Here $\bar e$ is expressed in terms of $\bar\eta_{k-1}$ and
$\varepsilon$ through \eqref{eq:Beta mapping} and
\eqref{eq:ksi and epsilon}. Define $F(\bar\eta_{k-1},\varepsilon,t)
\coloneqq\frac{f(x,t)}{g(x,t)}
-\frac{y_d^{(k)}+\dot\rho\xi-\Lambda(e)}{g(x,t)}.$ Further differentiation yields
\begin{align}\label{eq:m epsilon derivative}
\dot m\dot\varepsilon+m\ddot\varepsilon
=\dot u+T(\bar\eta_{k-1},\varepsilon,t),
\end{align}
where $T(\bar\eta_{k-1},\varepsilon,t)
\coloneqq\frac{\mathrm d}{\mathrm dt}\left(
\frac{f(x,t)}{g(x,t)}-
\frac{y_d^{(k)}+\dot\rho(t)\xi-\Lambda(e)}{g(x,t)}\right).$ To make the filtered error a state defined for every $t$, set $s\coloneqq\frac{u+F}{m}+\alpha\varepsilon$, and 
$\chi\coloneqq[\bar\eta_{k-1}^\top,\varepsilon,s]^\top$. The stated regularity and the bounds from Step~2 make $m,F,u$, and
hence $s$, absolutely continuous along the complete trajectory, while
$m$ is bounded away from zero. Therefore
$s=\dot\varepsilon+\alpha\varepsilon$ a.e. and
$|s(t)|\leq\bar s\coloneqq\bar\varepsilon_d+\alpha\bar\varepsilon$.
Multiplying $\dot s\in\mathsf K[\ddot\varepsilon]
+\alpha\dot\varepsilon$ by $m(\chi,t)$ and using
\eqref{eq:m epsilon derivative} gives $m(\chi,t)\dot s
\in m(\chi,t)\mathsf K[\ddot\varepsilon]
+\alpha m(\chi,t)\dot\varepsilon
=\mathsf K[\dot u]+T(\chi,t)-\dot m(\chi,t)\dot\varepsilon
+\alpha m(\chi,t)\dot\varepsilon.$
Equivalently, $m(\chi,t)\dot s\in-\tfrac12\dot m(\chi,t)s-\varepsilon
+\mathsf K[\dot u]+H(\chi,t)$, where $H(\chi,t)={}T(\chi,t)+\tfrac12\dot m(\chi,t)s
-\dot m(\chi,t)\dot\varepsilon
+m(\chi,t)\alpha\dot\varepsilon+\varepsilon$. Since $\dot u=-\tilde k_1\dot\varepsilon-\tilde k_1\alpha\varepsilon
-k_2\operatorname{sgn}(\varepsilon)
=-\tilde k_1s-k_2\operatorname{sgn}(\varepsilon)$ a.e.,
it becomes
\begin{align}\label{eq:s dot}
m(\chi,t)\dot s\in{}&-\tfrac12\dot m(\chi,t)s-\varepsilon
-\tilde k_1s\nonumber\\[-1mm]
&-k_2\mathsf K[\operatorname{sgn}](\varepsilon)+H(\chi,t).
\end{align}
The decomposition
$H(\chi,t)=H(\chi,t)-H(0,t)+H(0,t)$ gives
$|H(\chi,t)|\leq|H(\chi,t)-H(0,t)|+|H(0,t)|$. Step~2 bounds
$\bar\eta_{k-1},\varepsilon,s$ on a compact tube. On that tube, the
stated regularity makes $H$ uniformly Lipschitz in $\chi$, while
$H(0,\cdot)$ is absolutely continuous and
$H(0,\cdot),\dot H(0,\cdot)\in L_\infty$. Hence, for some $L_H>0$, $|H(\chi,t)-H(0,t)|\leq L_H\|\chi\|,\qquad t\geq0.$ Additionally, if $k_2>\|H(0,\cdot)\|_\infty
+\frac1\alpha\|\dot H(0,\cdot)\|_\infty,$ then \cite[Lemma~1]{xian2004continuous} gives 
$\int_0^tL(\tau)\,\mathrm d\tau\leq\bar h$ for all $t\geq0$, where $L(t)\coloneqq s(t)\big(H(0,t)-k_2\operatorname{sgn}(\varepsilon(t))\big)$ and 
$\bar h\coloneqq k_2|\varepsilon(0)|-\varepsilon(0)H(0,0)$. 

For $k\geq2$, recall from \eqref{eq:A and B} that $A$ is a
$Z$-matrix whose eigenvalues equal $\lambda$, and hence a nonsingular
$M$-matrix. Therefore, \cite[pp.~168]{qu2009cooperative} guarantees a
positive definite diagonal matrix
$C=\operatorname{diag}\{c_{i,i}\}\in\mathbb R^{(k-1)\times(k-1)}$
such that $Q\coloneqq CA+A^\top C>0$.
\footnote{A nonsingular $M$-matrix has nonpositive off-diagonal entries
and positive principal minors.}
Consider
$\chi_P\coloneqq[\chi^\top,\sqrt P]^\top$, with
$P\coloneqq\bar h-\int_0^tL(\tau)\,\mathrm d\tau\geq0$, and define
the set $\Omega_\chi\coloneqq\{\chi_P:|\varepsilon|\leq
\bar\varepsilon,\ |s|\leq\bar s\}$. On $\Omega_\chi$, define $V(\chi_P,t)\coloneqq V_1+V_2,$ where $ V_1\coloneqq\kappa_0\bar\eta_{k-1}^\top C\bar\eta_{k-1}$, $ V_2\coloneqq\tfrac12\varepsilon^2+\tfrac12m(\chi,t)s^2+P,$ and $\kappa_0>0$. For $k=1$, set $V_1\coloneqq0$.

On the compact tube, $0<\underline m\leq m(\chi,t)\leq\bar m$. Therefore positive constants $\underline\lambda,\bar\lambda$ exist
such that $\underline\lambda\|\chi_P\|^2\leq V(\chi_P,t)
\leq\bar\lambda\|\chi_P\|^2.$ For $k\geq2$, differentiation of $V_1$ gives $\dot V_1=-\kappa_0\bar\eta_{k-1}^\top Q\bar\eta_{k-1}
+2\kappa_0\tilde c\,\eta_{k-1}\eta_k,$ where $\tilde c\coloneqq c_{k-1,k-1}$. Since
$|\eta_k|=\rho|\tanh(\varepsilon)|\leq\rho_{\max}|\varepsilon|$,
completing squares yields $\dot V_1\leq-\kappa_0\tilde\kappa_1\|\bar\eta_{k-1}\|^2
+\frac{\kappa_0\tilde c}{\kappa_1}\rho_{\max}^2\varepsilon^2,$ where $0<\kappa_1<\underline Q/\tilde c$,
$\underline Q\coloneqq\lambda_{\min}(Q)$, and
$\tilde\kappa_1\coloneqq\underline Q-\kappa_1\tilde c>0$.

The trajectory and $P$ are absolutely continuous. Thus $V_2$ is
absolutely continuous, its derivative exists a.e., and the chain rule
along the regularized dynamics gives $\dot V_2\overset{\mathrm{a.e.}}{\in}\dot{\widetilde V}_2(\chi_P,t)
\coloneqq\nabla V_2^\top\mathsf K[\dot\chi_P]
+\frac{\partial V_2}{\partial t}.$ Here $\mathsf K[\cdot]$ replaces the signum term in
\eqref{eq:s dot}, and the corresponding term in $\dot P$, by
\eqref{eq:sign set}. Consequently, $\dot{\widetilde V}_2={}-\alpha\varepsilon^2+\varepsilon s
+\tfrac12\dot m(\chi,t)s^2
+m(\chi,t)s\mathsf K[\dot s]+\mathsf K[\dot P],$ and, for a.e. $t\geq0$, $\max_{v\in\dot{\widetilde V}_2}v
\leq{}-\alpha\varepsilon^2-s^2+L_H|s|\|\chi\|-k_1s^2 +s\big(H(0,t)-k_2\mathsf K[\operatorname{sgn}](\varepsilon)\big)
-\mathsf K[L] \leq{}-\alpha\varepsilon^2-s^2+L_H|s|\|\chi\|-k_1s^2.$ For $k\geq2$, combining this inequality with the bound for $\dot V_1$
gives $\dot V\leq L_H|s|\|\chi\|-K\|\chi\|^2-k_1s^2$, where
$K\coloneqq\min\{\kappa_0\tilde\kappa_1,1,
\alpha-\kappa_0\tilde c\rho_{\max}^2/\kappa_1\}>0$ for sufficiently
small $\kappa_0$. For $k=1$, the same inequality holds with
$K\coloneqq\min\{1,\alpha\}$. Finally, $L_H|s|\|\chi\|
\leq\frac{k_1s^2}{2}+\frac{L_H^2\|\chi\|^2}{2k_1},$ and hence $\dot V\leq-K\|\chi\|^2+\frac{L_H^2\|\chi\|^2}{2k_1}
\leq-\left(K-\frac{L_H^2}{2k_1}\right)\|\chi\|^2.$ Choosing $k_1>\frac{L_H^2}{2K}$ therefore yields $\dot V\leq-\gamma\|\chi\|^2$ a.e. for some
$\gamma>0$. For any finite $\tilde r>0$, choose
$\tilde c<\min_{\|\chi_P\|=\tilde r}
\underline\lambda\|\chi_P\|^2$. The corresponding sublevel set
$\Omega_f\coloneqq\{\chi_P\in\mathcal B(0,\tilde r):
\bar\lambda\|\chi_P\|^2\leq\tilde c\}$ is invariant. The constants
$\tilde r$ and $\tilde c$ may be selected large enough to include any
given $P(0)$. Independently of this selection, Step~2 already proves
boundedness of $\varepsilon$ for every initial condition satisfying
$|\eta_k(0)|<\rho(0)$. Integration of the derivative inequality gives
$\chi\in L_2\cap L_\infty$. The
regularized dynamics also give bounded $\dot\chi$, so $\chi$ is
uniformly continuous. Barbalat's lemma yields  $\lim_{t\to\infty}\chi(t)=0$.
Consequently, $\lim_{t\to\infty}\varepsilon(t)=0$, $\lim_{t\to\infty}\xi(t)=\lim_{t\to\infty}\tanh(\varepsilon(t))=0$, and
$\lim_{t\to\infty}\eta_k(t)=\lim_{t\to\infty}\rho(t)\xi(t)=0$. For $k\geq2$, the stable cascade
\eqref{eq:eta_i+1} gives $\lim_{t\to\infty}\bar\eta_k(t)=0$. For $k=1$,
$\eta_1=e_1$. Equation~\eqref{eq:Beta mapping} then gives
$\lim_{t\to\infty}e_i(t)=0$ for every $i\in\{1,\ldots,k\}$. Step~2 has already
established $|\eta_k(t)|<\rho(t)\leq\rho_{\max}$, so
Lemma~\ref{lemma:1} gives the stated error bounds.
\end{proof}

\begin{remark}
The signum term enters the applied input only through the integral state $\sigma$ defined in Step~1. Since $\varepsilon$ and $\sigma$ are absolutely continuous while $|\xi|<1$, $u=-(k_1+1)(\varepsilon-\varepsilon(0))-\sigma$ is absolutely continuous and hence continuous. Nevertheless, $\dot u=-(k_1+1)\dot\varepsilon-(k_1+1)\alpha\varepsilon-k_2\operatorname{sgn}(\varepsilon)$ almost everywhere. Therefore the augmented controller dynamics are discontinuous at $\varepsilon=0$ and interpreted in the Filippov sense \cite{filippov1988}.
\end{remark}

\section{Simulation results}\label{sec:sim}

To assess the proposed control design, we consider the Norrbin model for ship steering \cite{psillakis2016integrator}: %
\begin{align*}
    \dot{x}_1 &= x_2,\\
    \dot{x}_2 &= -\frac{1}{K_1}\left(x_2+a_s x_2^3\right) + \frac{K_2}{K_1}u, \qquad y = x_1,
\end{align*}
{Here $x_1$, $x_2$, and $u$ are the heading, heading rate, and rudder angle, respectively. We use $K_1=21~\mathrm{s}$, $K_2=0.23~\mathrm{s}^{-1}$, $a_s=0.3~\mathrm{s}^2$, and $x_1(0)=x_2(0)=0$. The reference $y_d(t)=0.3491\sin(0.1256t)$ has amplitude $20^\circ$ and period $50$ s. We set $\eta_2=e_2+\lambda e_1$, $\lambda=0.6$, $\rho(t)=(\rho_0-\rho_\infty)e^{-\ell t}+\rho_\infty$, $\rho_0=1$, $\rho_\infty=0.1$, $\ell=0.5$, and $k_1=k_2=20$, $\alpha=5$.}

{Figures~\ref{fig:ship_x1_yd}--\ref{fig:ship_u} show asymptotic tracking, $e_1,e_2\to0$, and confinement of $\eta_2$ within $\pm\rho(t)$. The transient oscillation in $u$ does not constitute classical chattering since the plant input is continuous and the signum term occurs only under the integral. It results from the nonsmooth integral action, the gains used to enforce performance and convergence, and possible numerical-integration amplification.}

\captionsetup[figure]{skip=2pt,belowskip=0pt}
\begin{figure}[tpbh]
    \centering
        \includegraphics[width=\columnwidth]{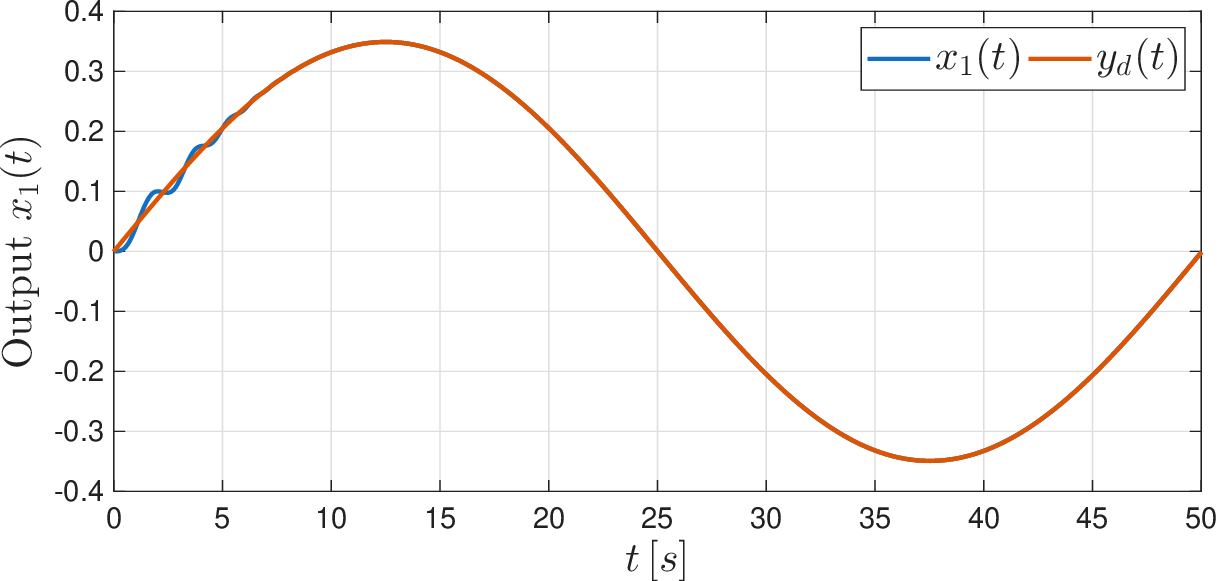}
        \caption{Heading $x_1$ and desired trajectory $y_d$.}
        \label{fig:ship_x1_yd}
        \vspace{-1em}
\end{figure}

\begin{figure}[tpbh]   
    \begin{subfigure}{\columnwidth}
        \centering
        \includegraphics[width=\columnwidth]{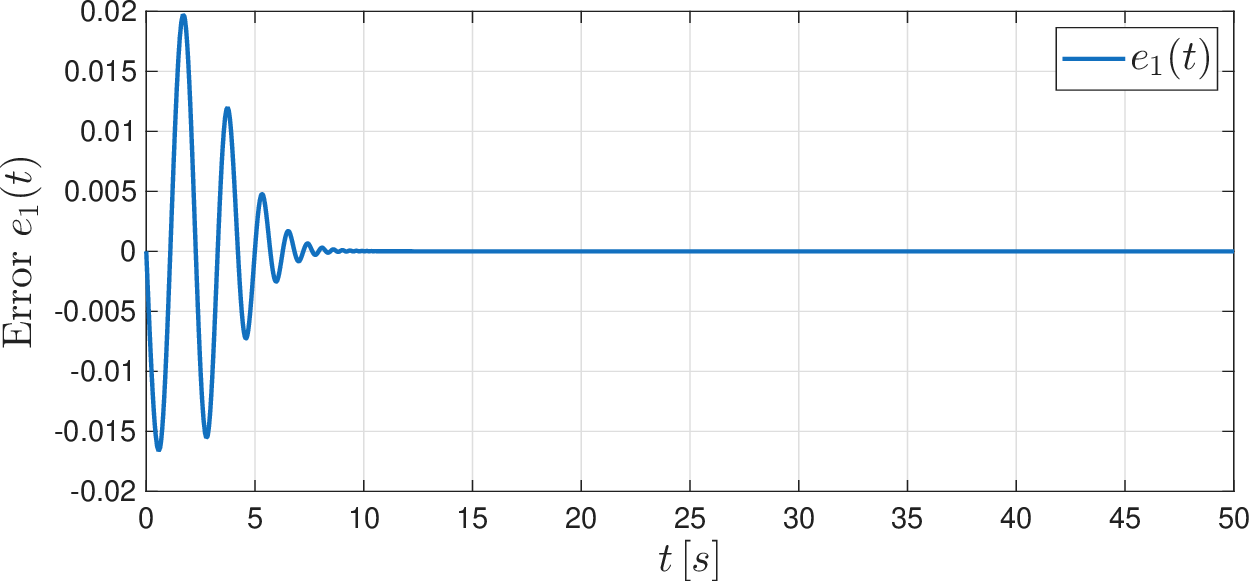}

    \end{subfigure}

    \begin{subfigure}{\columnwidth}
        \centering
        \includegraphics[width=\columnwidth]{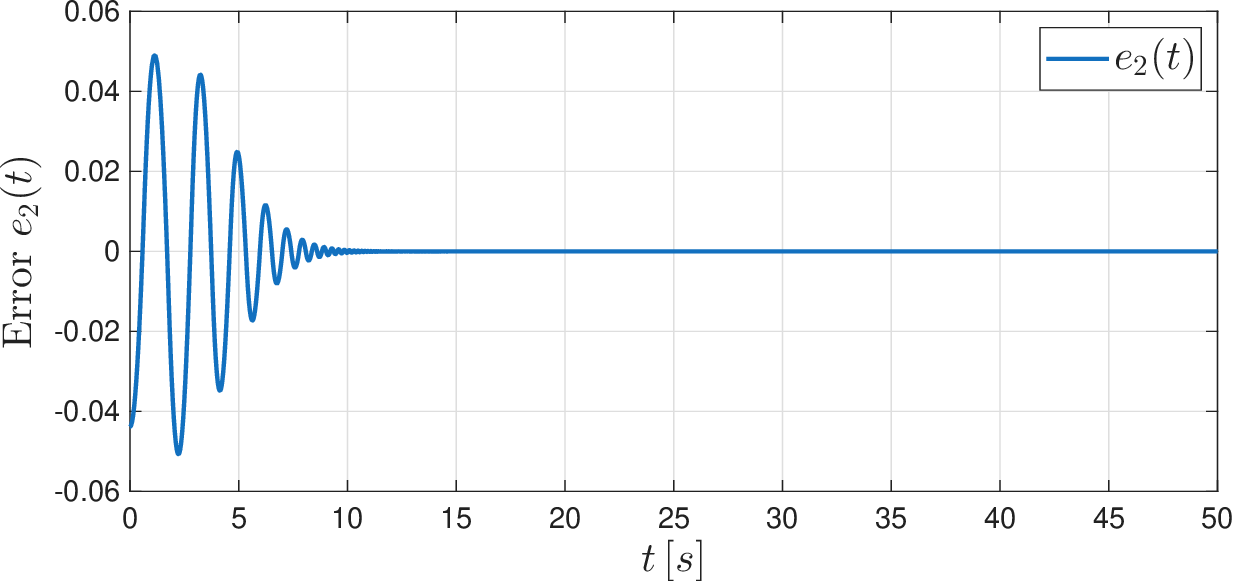}
    \end{subfigure}
        
    \caption{Tracking errors $e_1$ and $e_2$.}\label{fig:ship_e1}
    \vspace{-1em}
\end{figure}

\begin{figure}[tpbh]
    
    \begin{subfigure}{\columnwidth}
        \centering
        \includegraphics[width=\columnwidth]{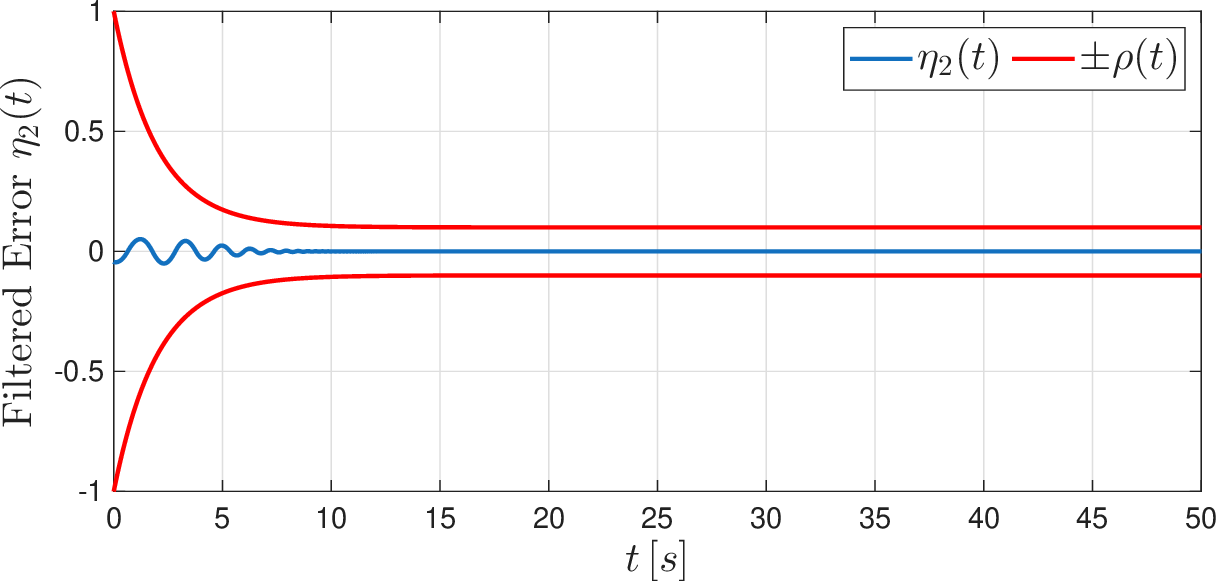}
        
    \end{subfigure}
    \caption{Filtered error $\eta_2$ and prescribed bounds $\pm \rho(t)$.}
\label{fig:ship_eta2_rho}
\vspace{-1em}
    
\end{figure}

\begin{figure}[tpbh]
    
    \begin{subfigure}{\columnwidth}
        \centering
        \includegraphics[width=\columnwidth]{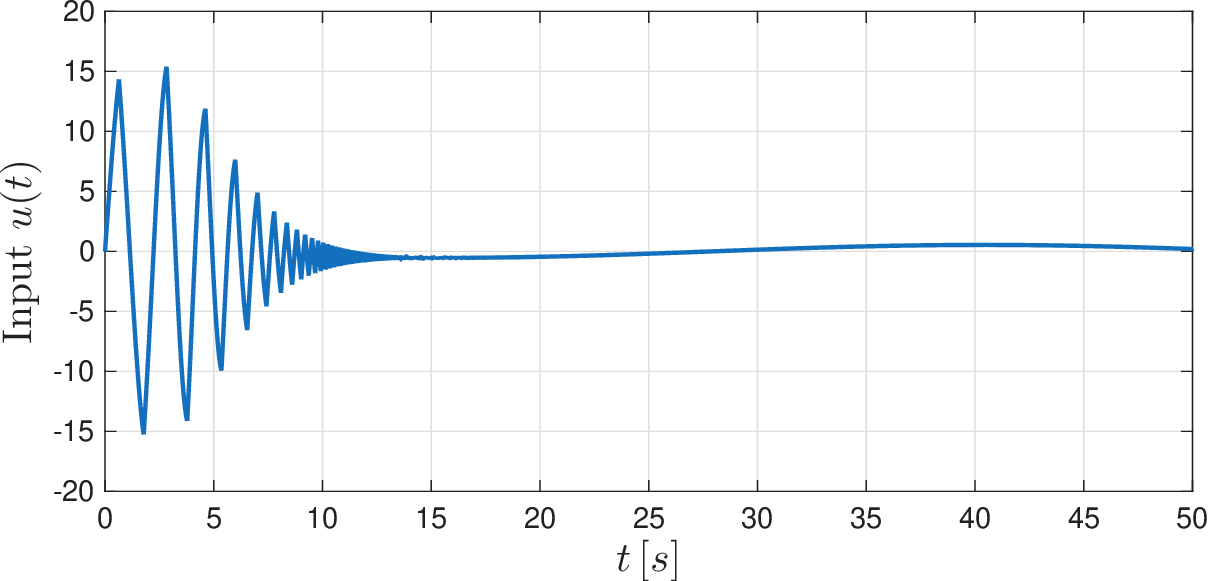}
        
    \end{subfigure}
    
    \caption{Control input $u$.}
        \label{fig:ship_u}
        \vspace{-1em}
\end{figure}

\section{Conclusion and future work}\label{sec:conclusion}

{This paper combined prescribed performance and asymptotic tracking for uncertain high-order nonlinear systems using a continuous, low-complexity controller. Boundedness, funnel invariance, and zero tracking error were established and demonstrated in ship steering. Future work will address multi-input multi-output systems, input constraints, and adaptive gain selection.}

\bibliographystyle{IEEEtran}
\bibliography{references}

@article{bechlioulis2016decentralized,
  title={Decentralized robust synchronization of unknown high order nonlinear multi-agent systems with prescribed transient and steady state performance},
  author={Bechlioulis, Charalampos P and Rovithakis, George A},
  journal={IEEE Transactions on Automatic Control},
  volume={62},
  number={1},
  pages={123--134},
  year={2016},
  publisher={IEEE}
}

@article{verginis2024barrier,
  title={Barrier integral control for global asymptotic stabilization of uncertain nonlinear systems under smooth feedback and transient constraints},
  author={Verginis, Christos K},
  journal={arXiv preprint arXiv:2409.04767},
  year={2024}
}

@article{verginis2024asymptotic,
  title={Asymptotic stabilization of uncertain systems with prescribed transient response via smooth barrier integral control},
  author={Verginis, Christos K},
  journal={IEEE 63rd Conference on Decision and Control (CDC)},
  pages={7284--7289},
  year={2024},
  organization={IEEE}
}

@article{verginis2020asymptotic,
  title={Asymptotic tracking of second-order nonsmooth feedback stabilizable unknown systems with prescribed transient response},
  author={Verginis, Christos K and Dimarogonas, Dimos V},
  journal={IEEE Transactions on Automatic Control},
  volume={66},
  number={7},
  pages={3296--3302},
  year={2021},
  publisher={IEEE}
}

@article{xian2004continuous,
  title={A continuous asymptotic tracking control strategy for uncertain nonlinear systems},
  author={Xian, Bin and Dawson, Darren M and de Queiroz, Marcio S and Chen, Jian},
  journal={IEEE Transactions on Automatic Control},
  volume={49},
  number={7},
  pages={1206--1211},
  year={2004},
  publisher={IEEE}
}

@book{qu2009cooperative,
  title={Cooperative control of dynamical systems: applications to autonomous vehicles},
  author={Qu, Zhihua},
  year={2009},
  publisher={Springer}
}

@article{cortes2008discontinuous,
  title={Discontinuous dynamical systems},
  author={Cortes, Jorge},
  journal={IEEE Control systems magazine},
  volume={28},
  number={3},
  pages={36--73},
  year={2008},
  publisher={IEEE}
}

@article{psillakis2016integrator,
  title={Integrator backstepping with the nonlinear PI method: An integral equation approach},
  author={Psillakis, Haris E},
  journal={European Journal of Control},
  volume={28},
  pages={49--55},
  year={2016},
  publisher={Elsevier}
}

@ARTICLE{4639441,
  author={Bechlioulis, Charalampos P. and Rovithakis, George A.},
  journal={IEEE Transactions on Automatic Control}, 
  title={Robust Adaptive Control of Feedback Linearizable MIMO Nonlinear Systems With Prescribed Performance}, 
  year={2008},
  volume={53},
  number={9},
  pages={2090-2099}}

@article{ilchmann2002tracking,
  title={Tracking with prescribed transient behaviour},
  author={Ilchmann, Achim and Ryan, Eugene P and Sangwin, Christopher J},
  journal={ESAIM: Control, Optimisation and Calculus of Variations},
  volume={7},
  pages={471--493},
  year={2002},
  publisher={EDP Sciences}
}

@ARTICLE{1310482,
  author={Xian, B. and Dawson, D.M. and de Queiroz, M.S. and Chen, J.},
  journal={IEEE Transactions on Automatic Control}, 
  title={A continuous asymptotic tracking control strategy for uncertain nonlinear systems}, 
  year={2004},
  volume={49},
  number={7},
  pages={1206-1211},
  doi={10.1109/TAC.2004.831148}}

@book{filippov1988,
  author={Filippov, Aleksei F.},
  title={Differential Equations with Discontinuous Righthand Sides},
  publisher={Kluwer Academic Publishers},
  address={Dordrecht, The Netherlands},
  year={1988}
}

@article{liu2022asymptotic,
  title={Asymptotic output tracking control with prescribed transient performance of nonlinear systems in the presence of unknown dynamics},
  author={Liu, Yong-Hua and Chen, Lang-Lei and Zhou, Qi and Su, Chun-Yi},
  journal={International Journal of Robust and Nonlinear Control},
  volume={32},
  number={17},
  pages={9363--9379},
  year={2022},
  publisher={Wiley Online Library}
}

@article{deng2022neural,
  title={Neural network-based adaptive asymptotic prescribed performance tracking control of hydraulic manipulators},
  author={Deng, Wenxiang and Zhou, Hang and Zhou, Jin and Yao, Jianyong},
  journal={IEEE transactions on systems, man, and cybernetics: systems},
  volume={53},
  number={1},
  pages={285--295},
  year={2022},
  publisher={IEEE}
}

@article{wang2017rise,
  title={RISE-based asymptotic prescribed performance tracking control of nonlinear servo mechanisms},
  author={Wang, Shubo and Na, Jing and Ren, Xuemei},
  journal={IEEE Transactions on Systems, Man, and Cybernetics: Systems},
  volume={48},
  number={12},
  pages={2359--2370},
  year={2017},
  publisher={IEEE}
}

@article{cao2026achieving,
  title={Achieving Asymptotic Tracking Yet Accommodating More Generalized Performance Specifications via Approximation-free Approach},
  author={Cao, Lan and Huang, Xiucai and Lv, Maolong and Song, Yongduan},
  journal={IEEE Transactions on Automatic Control},
  year={2026},
  publisher={IEEE}
}

@article{fan2017asymptotic,
  title={Asymptotic tracking controller design for nonlinear systems with guaranteed performance},
  author={Fan, Bo and Yang, Qinmin and Jagannathan, Sarangapani and Sun, Youxian},
  journal={IEEE transactions on cybernetics},
  volume={48},
  number={7},
  pages={2001--2011},
  year={2017},
  publisher={IEEE}
}

@article{cao2023neuroadaptive,
  title={Neuroadaptive asymptotic tracking control with guaranteed performance under mismatched uncertainties and saturated inputs},
  author={Cao, Lan and Huang, Xiucai and Ye, Hefu and Song, Yongduan},
  journal={IEEE Transactions on Systems, Man, and Cybernetics: Systems},
  volume={53},
  number={6},
  pages={3784--3794},
  year={2023},
  publisher={IEEE}
}

@article{berger2025funnel,
  title={Funnel control—A survey},
  author={Berger, Thomas and Ilchmann, Achim and Ryan, Eugene P},
  journal={Annual Reviews in Control},
  volume={60},
  pages={101024},
  year={2025},
  publisher={Elsevier}
}

@article{bu2023prescribed,
  title={Prescribed performance control approaches, applications and challenges: A comprehensive survey},
  author={Bu, Xiangwei},
  journal={Asian Journal of Control},
  volume={25},
  number={1},
  pages={241--261},
  year={2023},
  publisher={Wiley Online Library}
}

@article{kanakis2019guaranteeing,
  title={Guaranteeing global asymptotic stability and prescribed transient and steady-state attributes via uniting control},
  author={Kanakis, George S and Rovithakis, George A},
  journal={IEEE Transactions on Automatic Control},
  volume={65},
  number={5},
  pages={1956--1968},
  year={2019},
  publisher={IEEE}
}

@inproceedings{lee2019asymptotic,
  title={Asymptotic tracking via funnel control},
  author={Lee, Jin Gyu and Trenn, Stephan},
  booktitle={2019 IEEE 58th Conference on Decision and Control (CDC)},
  pages={4228--4233},
  year={2019},
  organization={IEEE}
}

\end{document}